\newcommand{\mR}{\mathbb{R}}
\newcommand{\mc}{\mathcal}
\newcommand{\mE}{\mathbb{E}}
\newcommand{\om}{\omega}
\newcommand{\Om}{\Omega}
\renewcommand{\leq}{\leqslant}
\renewcommand{\geq}{\geqslant}
\newtheorem{prop}{Proposition}
\newtheorem{assume}{Assumption}
\DeclareMathOperator{\eps}{\epsilon}
\newcommand{\Ket}{|\Phi\rangle}
\newcommand{\Bra}{\langle \Phi |}
\newcommand{\ketkk}{|\varphi_k\rangle}
\newcommand{\ketetaa}{|\eta_j\rangle}
\newcommand{\braetaa}{\langle\eta_j|}
\newcommand{\tr}{\text{Tr}}
\newenvironment{proof}[1][Proof]{\noindent\textbf{#1.} }{\ \rule{0.5em}{0.5em}}
\newcolumntype{L}[1]{>{\raggedright\let\newline\\arraybackslash\hspace{0pt}}m{#1}}
\newcolumntype{C}[1]{>{\centering\let\newline\\arraybackslash\hspace{0pt}}m{#1}}
\newcolumntype{R}[1]{>{\raggedleft\let\newline\\arraybackslash\hspace{0pt}}m{#1}}
\begin{document}
\title{Detection in human-sensor systems under quantum prospect theory using Bayesian persuasion frameworks}
\name{Yinan Hu and Quanyan Zhu\address{Tandon School of Engineering, New York University, Brooklyn, NY, USA}}

\maketitle
\begin{abstract}
    %Human-sensor systems has a variety of applications in robotics, healthcare and finance. Sensors observe the true state of nature and produce strategically designed signals to help human arrive at more accurate decisions of the state of nature.  
    Human-sensor systems have a wide range of applications in fields such as robotics, healthcare, and finance. These systems utilize sensors to observe the true state of nature and generate strategically designed signals, aiding humans in making more accurate decisions regarding the state of nature. We adopt a Bayesian persuasion framework that is integrated with quantum prospect theories. In this framework, we develop a detection scheme where humans aim to determine the true state by observing the realization of quantum states from the sensor. We derive the optimal signaling rule for the sensor and the optimal decision rule for humans. We discover that this scenario violates the total law of probability. Furthermore, we examine how the concepts of rationality can influence the human detection performance and the signaling rules employed by the sensor.
    %We formulate the human-sensor systems into a Bayesian persuasion framework consolidated into prospect theories and construct a detection scheme where human wants to find out the true state by observing the realization of the quantum states from the sensor. We obtain the optimal signaling rule for the sensor and optimal decision rule and verify that the total law of probability is violated in this scenario. We also illustrate how the concept of rationality and influence human's detection performance and also the sender's signaling rules. 
\end{abstract}
\begin{keywords}
Quantum Detection, Quantum Signal Processing, game theory, Bayesian Persuasion
\end{keywords}

\section{Introduction}
%Detection methods play an essential role in statistical signal processing. It has many applications in studying sensor-attacks \cite{meira2020synthesis_sensor_attack}, internet traffic, etc. Detection frameworks in human-sensor systems have played an important role in robotics \cite{amini2013quantum_robotics}, healthcare systems, as well as recommend systems \cite{lu2012recommender}. One important topic is how a sensor can guide human's decision making process by carefully designing signals delivered to human beings.

Detection methods play a vital role in statistical signal processing, encompassing a wide range of applications, such as studying sensor attacks \cite{meira2020synthesis_sensor_attack}, analyzing internet traffic, and more. Within the realm of human-sensor systems, detection frameworks have emerged as essential components in domains like robotics \cite{amini2013quantum_robotics}, healthcare systems, and recommendation systems \cite{lu2012recommender}. An essential aspect of research within this field centers on the influence of sensors in guiding human decision-making processes through the meticulous design of signals intended for individuals.

Recent studies have embraced quantum decision theories \cite{busemeyer2012quantum_cognition} to interpret various phenomena in human cognitive science such as order effect \cite{trueblood2017quantum_inference} and violation of sure-thing-principle \cite{busemeyer2009quantum_violation_sure_thing_principple} that cannot be adequately explained using classical theories.  which refers to the violation of the total probability law when probabilities represent human perception in the decision-making process. In \cite{snow2022quickest_detection_QDT}, researchers have developed quickest detection frameworks by integrating quantum decision models, aiming to capture bounded rationalities observed in human decision-making. However, it is essential to consider risk preference \cite{tversky1992advances_prospect_theory} as a crucial factor in human decision-making. Humans may not assign equal weight to gains and losses, and risk-averse individuals may be unwilling to trade the possibility of a loss for the chance of a gain when selecting lotteries. Theories on risk measures \cite{artzner1999coherent_risk} have been developed to provide a more sophisticated characterization of human risk preferences.

%Recently, several studies have adopted quantum decision theories \cite{busemeyer2012quantum_cognition} to interpret  a variety of phenomena in human cognitive science that cannot be interpreted using classical counterparts. One phenomenon of such kind is order effect \cite{trueblood2017quantum_inference}, which means that human's posterior belief on perception is impacted by the order at which the classical information is provided;  Another phenomenon is violation of sure-thing principle \cite{busemeyer2009quantum_violation_sure_thing_principple}, which refers to the violation of total probability law when the probabilities represent human perception in decision making process. 

%Recently, authors in \cite{snow2022quickest_detection_QDT} have developed quickest detection frameworks consolidated quantum decision models to capture bounded rationalities in human decision making. However, risk-preference \cite{tversky1992advances_prospect_theory} also plays an important role in human's decision making process as humans may not value gains and losses at the same weight. Risk-averse humans may not be willing to sacrifice the chance of a loss in exchange for the chance of a gain when selecting lotteries. There are also theories on risk measures \cite{artzner1999coherent_risk} to characterize human's risk preferences in a more sophisticated way. 

To this end, we formulate the detection of sensor-human systems to using quantum decision theory \cite{sornette2020quantum_propsect_theory}. This decision model integrates classical outcomes and the psychological state to capture human's bounded rationality, including risk-preference, in the decision-making process. %Such a decision model incorporates exogenous classical outcome with psychological state to capture human's bounded rationality including the risk-preference in decision making. We have following contributions: 
Our contributions can be summarized as follows.
First, we develop a comprehensive detection framework for human-sensor systems that takes into account risk-preference and incorporates interference effects,  thereby capturing the inherent bounded rationality of humans. 
%we develop a detection framework in human-sensor system where risk-preference, combined with interference to constitute the human's bounded rationality.
Second, we establish the existence of an optimal policy resembling a likelihood-ratio test for the human receiver within the detection framework. This finding sheds light on the optimal decision-making strategy for the human component of the system, enhancing our understanding of their behavior. %we show the existence of an optimal likelihood-ratio-test-like policy for the human receiver;
%Lastly, we demonstrate the violation of the sure-thing principle and investigate the relationships between the optimal threshold, prior beliefs, and risk-preference parameter. %third, we illustrate the violation of sure-thing-principle as well as the relationships between the optimal threshold and prior beliefs and risk-preference parameter.

The rest of the paper is organized as follows: Section \ref{sec:formulation} presents the formulation of the relationship between the sensor and the human receiver, along with the protocol for communication between them regarding the true state of nature. In Section \ref{sec:sol_concept}, we discuss the optimal decision policies for the human receivers and the optimal signaling rules for the sensors. Section \ref{sec:numerical_results} is dedicated to the numerical simulation of the proposed solution concepts. Specifically, we verify the violation of the sure-thing principle and illustrate the optimal thresholding based on different prior beliefs about the true hypothesis. Finally, Section \ref{sec:conclusion} concludes the paper. %in section \ref{sec:formulation} we formulate the relationship between the sensor and the human receiver and characterize the protocol that the sensor communicates with human regarding the true state of nature. In section \ref{sec:sol_concept} we the optimal decision policies for the human receivers and optimal signaling rules for sensors. In section \ref{sec:numerical_results} we simulate the solution concepts numerically. In particular, we verify the violation of sure-thing-principle and depict the optimal thresholding regarding the prior upon the true hypothesis. Finally, we conclude in section \ref{sec:conclusion}. 

\textbf{Related work:} This work builds upon previous research such as \cite{gezici2018_likelihood_ratio_test_prospect_hypo} and \cite{nadendla2016human_agent_hypo_testing}, but introduces a novel perspective by incorporating quantum decision theory into the human decision-making process. In addition, we employ the Bayesian persuasion model \cite{kamenica2019bayesian_persuasion} to formulate the behaviors of the sensor. By integrating these frameworks, we aim to capture the complex dynamics within the sensor-human system.
%This work serves as a variation of the work \cite{gezici2018_likelihood_ratio_test_prospect_hypo} and \cite{nadendla2016human_agent_hypo_testing} by adopting quantum decision theory in human's decision making process. We here adopt Bayesian persuasion \cite{kamenica2019bayesian_persuasion} model to formulate sensor's behaviors. There are numerous game-theoretic frameworks to formulate the relationship between strategic agents.     

%\textcolor{red}{(for long paper only)}

\textbf{Notation:}
$\mc{H}$: the Hilbert space (over the set of real numbers $\mR$); $\mc{H}^*$: the dual space of $\mc{H}$; $\langle \Phi|\in\mc{H}^*$: the left state vector; $|\Phi\rangle\in\mc{H}$: the right state vector;
$B(\mc{H})$: the space of all positive, Hermitian and bounded operators from $\mc{H}$ to itself; 
$\mc{S}$: the subset of $B(\mc{H})$ such as trace of its operators is $1$; $\mc{V}$: the space of projection-valued measurements \cite{von2018mathematical_QM};
$S$: the space of signals; $\Delta(\cdot)$: the set of probability measures over the given space;
$\mathbf{1}\in B(\mc{H})$: the identity operator; $\Om=\{H_0,H_1\}$: the space of states.

\section{Problem Formulation} 
\label{sec:formulation}
In this section, we develop the framework of detection in human-sensor systems where human adopts quantum decision theory \cite{sornette2020quantum_propsect_theory}. We assume that there are two underlying states of the system:  normal state $\om = H_0$  and abnormal state $\om = H_1$. Under each hypothesis, the observations generated $s'\in S$ obey different distributions:
\begin{equation}
    H_0: s'\sim  f_0(s),\;\;H_1: s'\sim f_1(s),\;
    \label{hypo_rho1_rho0} 
\end{equation}
where $f_0,f_1$ are probability density functions.  We associate a common prior $p(H_1),p(H_0)$, with $p(H_1)+p(H_0)=1$ with the true hypothesis. 
\begin{figure}
    \centering
    \includegraphics[scale=0.24]{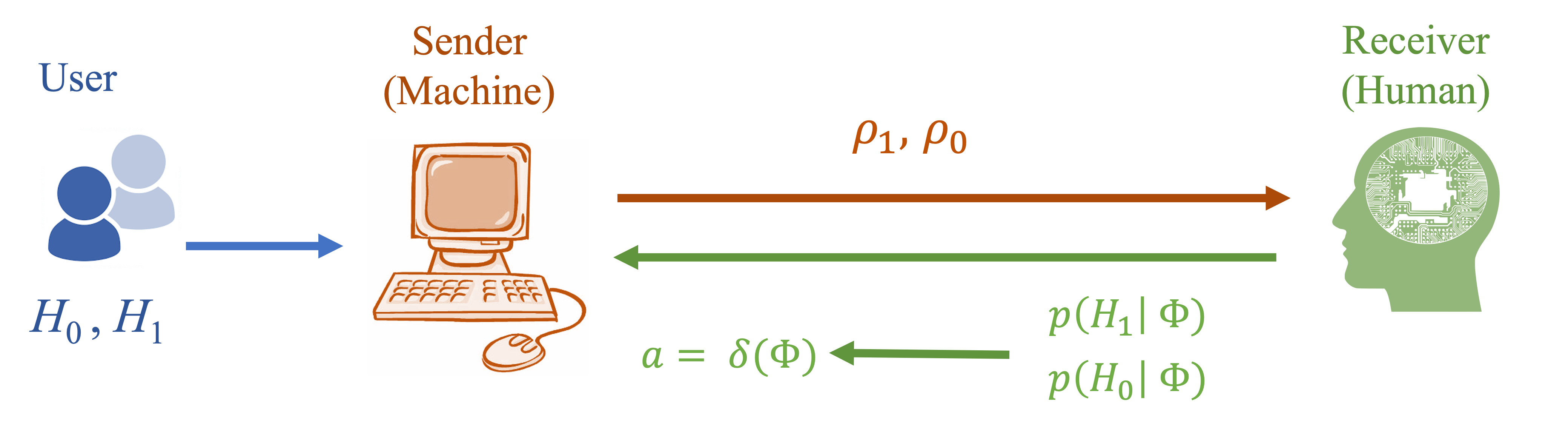}
    \caption{The human-sensor interaction scheme. Before the game starts the sender (sensor) commits to a type-dependent signaling devices $\rho_0,\rho_1\in \mc{S}$. The sender is a sensor/machine that obtains the true state. }
    \label{fig:quantum_prospect_theory}
\end{figure}
\begin{comment}
    It is understood that the quantum decision theory is a generalization of classical prospect theory \cite{tversky1992advances_prospect_theory} where the classical probability is generalized into quantum probability. Such generalization, assumes the validity of the quantum-classical correspondence principle \cite{bohr1976correspondence_principle} that was initiated by Bohr for analyzing spectral structures for atom spectra. 
\end{comment}
%The key feature of quantum decision theory is the entanglement of quantum states that corresponds to a classical, exogenous outcome with the quantum states associated with endogenous psychological mind state of humans. Mathematically, for every classical observation of the signal $s'\in S$, the human receiver associates a composite state of mind to produce a prospect state $|\Phi\rangle \in \mc{H}= \mc{H}_C\otimes \mc{H}_I, \;|\psi^s\rangle \in \mc{H}_C,\;\ket\in\mc{H}_I$. 

The key feature of quantum decision theory lies in the entanglement of quantum states, which connects classical, exogenous outcomes with the quantum states associated with the endogenous psychological mind state of humans. This connection is achieved by mathematically associating a composite state of mind, represented as $|\Phi\rangle \in \mathcal{H} = \mathcal{H}_C \otimes \mathcal{H}_I$, with each classical observation of the signal, denoted as $s' \in S$. The composite state consists of a quantum state $|\psi^{s'}\rangle \in \mathcal{H}_C$ representing the cognitive aspect of the human's mind and a quantum state $|\chi\rangle \in \mathcal{H}_I$ representing the subjective interpretation or perception associated with the signal. The human receiver produces a `prospect state'
$ |\Phi\rangle = |\psi^s\rangle \otimes \sum_{k}{a_{sk}\ketkk}= \sum_{k}{a_{sk}|\psi^s\varphi_k\rangle}$ (with  $\sum_{k=1}{|a_{sk}|^2} = 1$), where $\{\ketkk\}$  is  a set of orthonormal basis spanning $\mc{H}_I$ as the space of perception states. 
%For decision problems between completely exclusive events, such as distinguishing between the outcome $s$ and the outcome $s\neq s'$ with $\langle s | s'\rangle = 0$. To identify the prospect state suggests the outcome $s$, the human decision maker can construct the following operator 
% %\begin{equation}
%     P_{s} = \Big(\sum_{k}{a_{sk}|\psi^s\varphi_k\rangle\Big)\Big(\sum_{k'}{a_{sk'}\langle \psi^s\varphi_k'|}}\Big).
% \end{equation}
% For $s'\neq s$, we can construct the operator $P_{s'}$ in a similar way as before. Notice for any possible prospect state $\Phi$, either $\Bra P_s\Ket= 0$  or $\Bra P_{s'}\Ket= 0$, suggesting that the corresponding classical outcome is not $s$ ($s'$). Observing  $\Bra P_s\Ket= 1$ suggests the underyling outcome is $s$ for certainty. 
%What is discussed before is situations when the observations of signals are deterministic.  
When the realization of signals $s$ is stochastic, obeying distributions $f_1,f_0$ as in \eqref{hypo_rho1_rho0} and depending on the true state of nature, the sensor (sender) generates two possible ``mixed prospect states" $\tilde{\rho}_1$ and $\tilde{\rho}_0$ from the set $\mathcal{S}$ as follows:  %the sensor/sender equivalently, generates two possible \textit{`mixed prospect states $\tilde{\rho}_1,\tilde{\rho}_0\in \mc{S}$'} as follows: 
 \begin{align}
        \tilde{\rho}_1 & = \sum_{s}{f_1(s)\Big(\sum_{k}{a_{sk}|\psi^s\varphi_k\rangle\Big)\Big(\sum_{k'}{a_{sk'}\langle \psi^s\varphi_{k'}|}}\Big)},  
        \label{eq:rho1_prospect_state}
        \\
        \tilde{\rho}_0 & =  \sum_{s}{f_0(s)\Big(\sum_{k}{a_{sk}|\psi^s\varphi_k\rangle\Big)\Big(\sum_{k'}{a_{sk'}\langle \psi^s\varphi_{k'}|}\Big)}}, 
       \label{eq:rho0_prospect_state}
    \end{align}
%To identify which distribution of signals is the true one ($f_1,f_0$), the human receiver cannot simply construct operators as before as any propspect state will produce a positive probability. The human receiver needs to strategically design an decision operator that minimizes her weighted risk. 
%The sensor does not have access to the eigenvalues $r=(r_1,\dots,r_d)\in \mR^d$ of the human's mindset $\rho\in \mc{S}$, but knows the eigenvectors and thus $\rho$ can be expressed in the following form:
%\begin{equation}
 %   \rho = \sum_{j=1}^d{r_j\ketpsij\brapsij},\;\ketpsij\in\mc{H},
%\end{equation}
%The sender does not knows the exact value of $r = (r_1,\dots,r_d)\in\mR^d$ (so treat it as a random variable $\hat{r}$) but knows they obeys some distribution $r\sim \Delta(\Delta(\mR^d))$.  

\textbf{The protocol:} We adopt a sender-receiver framework to model the relationship between a sensor and a human being, as illustrated in Figure \ref{fig:quantum_prospect_theory}. The system consists of an underlying state that satisfies the requirements described in \eqref{hypo_rho1_rho0}. We apply the following decision-making protocol for human-sensor system.  The sender(sensor) first commits to a type-dependent signal devices: $\rho_1,\rho_0\in \mc{S}$ based on the `vanilla prospect states' $\tilde{\rho}_1,\tilde{\rho}_0$. The sender observes the realization of the true state $\om\in \Om$. The sender delivers randomized signal to the receiver(human). The receiver  observes the realization of the signal $\Phi\in \mc{H}$. The receiver takes an action $a = \delta(\Phi)\in\{0,1\}$ suggesting that the human thinks the hypothesis $H_a$ holds true. The sender and the receiver both obtain reward/cost based on the true state and their actions.  

\textbf{Sender's utility function:} 
We denote $v:\Om\times \mc{S}\times \mc{S}\times B(\mc{H}) \rightarrow \mR$ be the sender's utility function.  The sender aims at seeking optimal signaling devices $\rho^*_1,\rho^*_0$ by solving the following optimization problem:
\begin{equation}
\begin{aligned}
    &\underset{\rho_1,\rho_0\in \mc{S}}{\max}\Big\{\mE_{\substack{\Phi_0\sim\rho_0 \\ \Phi_1\sim\rho_1}}[v(\om,\Phi_1,\Phi_0, P^*)]\Big\},\\
    &\text{s.t.}\;P^*\in \arg\underset{P\in B(\mc{H})}{\min}\;u(\rho^*_1,\rho_0^*,P),
\end{aligned}
\label{eq:Bayesian_Persusasion_sender}
\end{equation}
\begin{comment}
    \begin{equation}    \underset{\sigma_1,\sigma_0\in\Delta(S)}{\max}\;\mE_{\hat{r}\sim \mu }\;\Big\{\mE_{\substack{s_0\sim\sigma_0 \\ s_1\sim\sigma_1}}[v(s_1,s_0, \delta^*(s_1;\rho),\delta^*(s_0;\rho))]\Big\},
\label{eq:Bayesian_Persusasion_sender}
\end{equation}
\end{comment}
where $P^*\in B(\mc{H})$ and $u: \mc{S}\times \mc{S} \times B(\mc{H})\rightarrow \mR$ are the optimal decision rule and the utility function of the human receiver, respectively,  which we will specify later. 
%We could construct three categories for the sender: 1) helper, where ; 2) neutral; 3) destroyer; For helpful sensors 

\textbf{Human's decision model:}
Upon receiving the signal $s\in S$ from the sensor, the human receiver first construct a prospect state $\Phi$ based on $s$ as mentioned before. Then the human receiver updates the common prior belief $p(H_0),p(H_1)$ on the true hypothesis into posterior belief based on Bayes' rule:
\begin{equation}
    p(H_j|\Phi) = \frac{p(H_j)\Bra \rho_j \Ket}{p(H_0)\Bra \rho_0 \Ket + p(H_1)\Bra \rho_1 \Ket},\;\;j=0,1.
\end{equation}

The human arrives at a decision rule $\delta^*$ through projective positive-valued measurements (POVM) \cite{von2018mathematical_QM}: $P(\cdot) = \sum_{j}{\ketetaa\braetaa}$,
 where $\{\ketetaa\}_j\subset \mc{H}$ forms a set of orthonormal base vectors for the decision maker to find out.  Given any realization of the prospect state $\Ket \in \mc{H}$, we can form that the human receiver makes a probabilistic decision $a_1=\delta(\Phi) = 1$ (i.e., considering that $H_1$ holds true) with probability
$\mathbb{P}(a_1=1|\Phi) = \Bra P \Ket \equiv g + q$, where $g,q$ represent the \textit{utility factor} and the \textit{attraction factor} respectively \cite{sornette2020quantum_propsect_theory}.

Motivated by the frameworks in \cite{gezici2018_likelihood_ratio_test_prospect_hypo}, we let the probability $P_F = \tr(P\rho_0)$ denote the false alarm rate, and the probability $P_D = \tr(P\rho_1)$  the detection rate. They are used to characterize receiver's risk function due to errors. We now formulate human's problem as an optimization problem where we construct the human receiver's weighted risk function $u$ that takes into account the probabilities $P_F,P_D$ in a way similar to \cite{gezici2018_likelihood_ratio_test_prospect_hypo} based on sensor's equivalent signaling rules $\rho_1,\rho_0$ as follows:
\begin{equation}
\begin{aligned}
  &\underset{P\in B(\mc{H})}{\min }u(\rho_1,\rho_0,P,\Phi)  \\
  &={ w(p(H_0|\Phi)\tr(P \rho_0))}u_{01}+ w(p(H_1|\Phi)\tr(P\rho_1))u_{11} \\
  &+ w(p(H_0|\Phi)(1-\tr(P \rho_0)))u_{00}\\
     &+w(p(H_1|\Phi)(1-\tr (P\rho_1)))u_{10},
\end{aligned} 
    \label{eq:util_detector_proactive}
\end{equation}
where for convenience we assume $u_{11},u_{00}<0,u_{01},u_{10}>0$. The weight function $w:[0,1]\rightarrow [0,1]$ in \eqref{eq:util_detector_proactive} is selected the same as in \cite{nadendla2016human_agent_hypo_testing}:
\begin{equation}
    w(z;\eps) = {z^{\eps}},\;\;z\in[0,1],\;\eps>0,
    \label{eq:weight_function}
\end{equation}
 where $0<\eps<1$ corresponds to a pessimistic agent, while $\eps>1$ an optimistic agent \cite{nadendla2016human_agent_hypo_testing}. 

\begin{comment}
     \begin{assume}
    We assume that the human receiver earns positive reward under correction predictions and incurs cost under incorrect ones: that is, $u_{11},u_{00}>0,\; u_{01},u_{10}<0$.
    \label{assume:positive_rewards}
\end{assume}
\end{comment}

\section{Theoretical Results}
\label{sec:sol_concept}
In this section, we solve the optimization problems \eqref{eq:util_detector_proactive} and \eqref{eq:Bayesian_Persusasion_sender}. Notice the sender's utility function $v$  in \eqref{eq:Bayesian_Persusasion_sender} characterizes sender's type-dependent strategies of changing the original prospect states. We assume that the sensor construct mixed prospect states with the same form but with different `perception coefficients' $a^1_{sk}\in\mR,a^0_{sk}, s\in S,k=1,2,\dots, d$. Below, we demonstrate that the human agent's optimal decision rule corresponds to the Quantum Likelihood Ratio Test (QLRT), which bears resemblance to the approach outlined in \cite{helstrom1976quantum_estimation_detection}. % We derive below that the human agent will adopt quantum likelihood ratio test (QLRT) similar to \cite{helstrom1976quantum_estimation_detection} as his optimal decision rule below.  
\begin{prop}[QLRT as human's optimal strategy]
\label{prop:human_decision_rule}
Let $\rho_1,\rho_0\in B(\mc{H})$ be the sender's signaling devices and let $\Phi\in\mc{H}$ be the prospect state. Let the problem \eqref{eq:util_detector_proactive} be the human's receiver's optimization problem, where the receiver aims at developing optimal measurements $P^*\in B(\mc{H})$. Suppose that the weight function $w$ defined in \eqref{eq:weight_function} is monotonically increasing. Then, we arrive at the following conclusion: 
\begin{equation}
    P^* =  \sum_{\eta_j>0}{\ketetaa\braetaa},
    \label{eq:optimal_decision_rule_prospect}
\end{equation}
where $\ketetaa$ are the eigenvectors of $\rho_1-\tau \rho_0$ with eigenvalues $\eta_j,\;j=1,2,\dots$ i.e.,$(\rho_1-\tau \rho_0) \ketetaa = \eta_j \ketetaa$ for some $\tau\geq 0$.  
\end{prop}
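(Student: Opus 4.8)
The plan is to exploit the fact that the objective $u$ in \eqref{eq:util_detector_proactive} depends on the measurement operator $P$ only through the two scalars $P_F = \tr(P\rho_0)$ and $P_D = \tr(P\rho_1)$. So I would first reduce the operator optimization to a two-dimensional problem over the achievable region $\mathcal{R} = \{(\tr(P\rho_0),\tr(P\rho_1)) : 0 \preceq P \preceq \mathbf{1}\} \subseteq [0,1]^2$. Since $\mathcal{R}$ is the image of the convex, compact set of measurements under the linear map $P \mapsto (\tr(P\rho_0),\tr(P\rho_1))$, it is itself convex and compact, and $u$ is continuous, so a minimizer exists.

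Next I would record the monotonicity that the hypothesis on $w$ buys us. Writing $\alpha = p(H_0|\Phi)$ and $\beta = p(H_1|\Phi)$, the map $P_F \mapsto w(\alpha P_F)u_{01} + w(\alpha(1-P_F))u_{00}$ is increasing (the first summand increases because $w$ is increasing and $u_{01}>0$; the second increases because $w(\alpha(1-P_F))$ decreases in $P_F$ while $u_{00}<0$), and symmetrically $P_D \mapsto w(\beta P_D)u_{11} + w(\beta(1-P_D))u_{10}$ is decreasing. Hence $u$ strictly decreases as $P_F$ shrinks and $P_D$ grows, so any minimizer must lie on the ``northwest'' (Pareto) boundary of $\mathcal{R}$: the points that cannot be improved in both coordinates at once.

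The key step is to convert this boundary property into the stated eigenstructure. Because $\mathcal{R}$ is convex, the supporting hyperplane theorem guarantees that every point of its northwest boundary maximizes a linear functional $P_D - \tau P_F = \tr\bigl(P(\rho_1-\tau\rho_0)\bigr)$ over $\mathcal{R}$ for some slope $\tau \geq 0$, the sign being forced by the direction of monotonicity just established. It then remains to solve this linear program over measurements: writing the spectral decomposition $\rho_1 - \tau\rho_0 = \sum_j \eta_j |\eta_j\rangle\langle\eta_j|$, we have $\tr(P(\rho_1-\tau\rho_0)) = \sum_j \eta_j \langle\eta_j|P|\eta_j\rangle$ with each $\langle\eta_j|P|\eta_j\rangle \in [0,1]$ because $0\preceq P\preceq\mathbf{1}$, so the maximum is attained by placing full weight on the positive-eigenvalue directions, i.e. $P^* = \sum_{\eta_j>0}|\eta_j\rangle\langle\eta_j|$, which is exactly \eqref{eq:optimal_decision_rule_prospect}. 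Since $u$ sees $P$ only through $(P_F,P_D)$, this projection realizes the optimal pair and is therefore an optimal measurement.

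I expect the main obstacle to be the middle step, namely rigorously pinning down $\tau$ and justifying that the monotone (but $w$-nonlinear) objective is minimized precisely at a linear-functional maximizer. The cleanest route is a tangency/KKT argument: at the optimum the gradient direction $(\partial u/\partial P_F,\, \partial u/\partial P_D)$ must be proportional to the outward normal $(-\tau,1)$ of the supporting line, giving $\tau = (\partial u/\partial P_F)/(-\partial u/\partial P_D) \geq 0$. One must also dispose of the degenerate corner cases ($\tau=0$, or the frontier collapsing to pure minimization of $P_F$) and note that $\rho_1-\tau\rho_0$ is Hermitian, so its spectral decomposition and the ordering $\eta_j>0$ are well defined.
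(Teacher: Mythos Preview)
Your proposal is correct and follows essentially the same approach as the paper: reduce the operator problem to the two scalars $(P_F,P_D)$, use the monotonicity of $w$ together with the sign pattern of the $u_{ij}$ to push the optimum onto the ROC frontier, and then invoke the Helstrom characterization of that frontier by projectors onto the positive eigenspace of $\rho_1-\tau\rho_0$. The only cosmetic difference is that you reach the frontier via the supporting-hyperplane theorem, whereas the paper argues by Neyman--Pearson-style domination (for any $P'$ there is a QLRT with the same detection rate and no larger false-alarm rate); these are equivalent views of the same convex geometry.
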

\begin{proof}
We adopt the proof similar to the one for proposition 1 in \cite{gezici2018_likelihood_ratio_test_prospect_hypo}. We know that the human receiver aims to distinguish between two states $H_1,H_0$ corresponding to two mixed states $\rho_1,\rho_0$, respectively as in \eqref{eq:rho0_prospect_state} and \eqref{eq:rho1_prospect_state}.  Denote $y^*= \tr(P^*\rho_1)$ as the detection rate and $x^*=\tr(P^*\rho_0)$ as the false alarm rate. 
We construct the projective measurement $P^*$ for binary hypothesis testing as \eqref{eq:optimal_decision_rule_prospect}. 
Now we claim the decision rule $\delta^*:\mc{H}\rightarrow [0,1]$ is constructed as $\delta^*(\Phi) = \Bra P^*\Ket$ is optimal. To see this,
first notice via \cite{helstrom1976quantum_estimation_detection} that $P^*$ minimizes the Bayes risk for quantum detection:
\begin{equation}
    P^*\in\arg\underset{P\in B(\mc{H})}{\min}\; \tau\tr(P\rho_0) + \tr((\mathbf{1}-P)\rho_1).
\end{equation} 
Thus if we pick another arbitrary projective operator-valued measurement $P'\in B(\mc{H})$ leading to another detection rate $x'=\tr(P'\rho_1)$ and false alarm rate $y' = \tr(P'\rho_0)$,
similar to the proof of Neyman-Pearson lemma \cite{neyman_pearson1933}, we can derive
for any $P'\in B(\mc{H})$,$\tau(x^*-x')\leq y^*-y'$, where $x^*,y^*$ are the false alarm rate and detection rate of $P^*$, respectively. Since we set $y^*=y'$, we conclude $x^*\leq x'$. 
Since $w$ is monotone increasing and that the left hand side is $0$, we have 
\begin{equation}
    w(\tr((1-P^*)\rho_1))\geq w(\tr ((1- P')\rho_1)).
\end{equation}
As a result, for any measurement $P'$ leading to a certain detection rate $y'$, we can always construct a corresponding $P^*$ of the form \eqref{prop:human_decision_rule} achieving a lower false positive rate than the one of $P'$ under the same detection rate. Thus using $P^*$ of the form \eqref{eq:optimal_decision_rule_prospect} we can lower the second term without changing the first term in \eqref{eq:util_detector_proactive}. 
Thus a generic human's optimal decision rule $P^*$ minimizing the utility function $u$ must be of the form given in \eqref{eq:optimal_decision_rule_prospect}. 
\end{proof}

% The conclusion in proposition \ref{appd:proof_human_decision_rule} states that the human decision rule under prospect theory is nothing but a likelihood ratio test in quantum hypothesis testing formulation \cite{helstrom1976quantum_estimation_detection} if assumption \ref{assume:positive_rewards} holds. 

\section{Numerical results} 
\label{sec:numerical_results}
%In this section, we illustrate the optimal detection policy for the human and the optimal signaling rule for the sensor as characterized in Section \ref{sec:sol_concept}. The first numerical example originates from a cognitive case study named Prisoner's dilemma in \cite{busemeyer2006quantum_disjunct}, in which there are two parties: a human and her opponent.

In this section, we present the numerical illustration of the optimal detection policy for the human agent and the optimal signaling rule for the sensor, as discussed in Section \ref{sec:sol_concept}. To demonstrate these concepts, we utilize a cognitive case study known as the Prisoner's Dilemma, as described in \cite{busemeyer2006quantum_disjunct}. In this scenario, there are two parties involved: a human agent and her opponent.
%Neither player knows the other's choice before receiving the outcome, which depends on both of their choices. 
\begin{comment}
    If both defect, they will receive a high reward (say $(25,25)$) as police cannot convict them without confessions. If they both confess, they will receive a low reward (say $(10,10)$) as they are convicted but the sentence is mitigated. If one cooperates while the other does not, then the one cooperating receives highest reward (say $30$) as she is not found guilty due to lack of sneaking, yet has credit for turning up the opponent.  The other party, unfortunately, receives the worst outcome (say $0$) as he is found guilty without mitigating circumstances. 
\end{comment}
signaling rules according to $\rho_1,\rho_0$.
The human decision maker faces a binary choice: defection $(a=1)$ or cooperation $(a=0)$. Simultaneously, the opponent's choices, represented by the true state of nature, consist of defection $H_1$ or cooperation $H_0$. Notably, the human agent is unaware of the opponent's action until after she has made her own decision.
The human agent's objective is to maximize her reward, which is higher when her action aligns with that of her opponent. Human makes a decision based on \eqref{eq:util_detector_proactive} while the sensor (interpreted as a message passer) producing signaling rules according to $\rho_1,\rho_0$. % To inform her decision-making process, the human agent relies on \eqref{eq:util_detector_proactive}, which provides a utility function guiding her choice. On the other hand, the sensor in our framework plays the role of a message passer. It generates signaling rules, denoted as $\rho_1$ and $\rho_0$, which determine the information communicated to the human agent regarding the opponent's action. These signaling rules are strategically designed by the sensor to effectively influence the human agent's decision-making process.

%The decision-making process of the human agent in our framework involves two possible choices: defection ($a=1$) or cooperation ($a=0$). Meanwhile, the opponent's choices, represented by the true state of nature, can either be defection ($H_1$) or cooperation ($H_0$). Importantly, the human agent is unaware of the opponent's action until after making her own decision.

%The human agent's objective is to maximize her reward, which is higher when her action aligns with that of her opponent. To achieve this, the human agent employs a decision rule based on  \eqref{eq:util_detector_proactive}.

%On the other hand, the sensor in our framework plays the role of a message passer. It generates signaling rules, denoted as $\rho_1$ and $\rho_0$, which determine the information passed to the human agent regarding the opponent's action. These signaling rules are designed strategically to guide the human agent's decision-making process effectively.

\textbf{Violation of sure-thing-principle:}
The sure-thing-principle (STP), or total probability law,  can be interpreted in decision theory as a phenomenon that if under two states $H_1,H_0$, an action $a$ is preferred to $a'$, then such preference is carried over to the scenario where the state is unknown. 
%Authors in \cite{tversky1992disjunction_effect_empirical} first observed that in the game of prisoner's dilemma empirically the defect rate of the agent is lower when the opponent's action is unknown ($66\%$) than the counterparts when the opponent either defects ($91\%$) or cooperates ($84\%$).   
Authors in \cite{busemeyer2006quantum_disjunct} have used quantum probabilistic models to justify such violations under investigations of the case of Prisoner's dilemma. 
We assume $d=2$ and $K=5$ and fix the realization of the state $\Ket$. We apply the same payoff matrix as in \cite{busemeyer2006quantum_disjunct} and set the reward values as $u_{00} = 20, u_{01} = 5, u_{10} = 10, u_{11} = 25$. Using different values of the attraction factor from $0$ to $1$, we demonstrate the violation in \cite{tversky1992disjunction_effect_empirical} in Figure \ref{fig:violation_of_sure_thing_principle}. 
\begin{figure}
    \centering
    \includegraphics[scale=0.28]{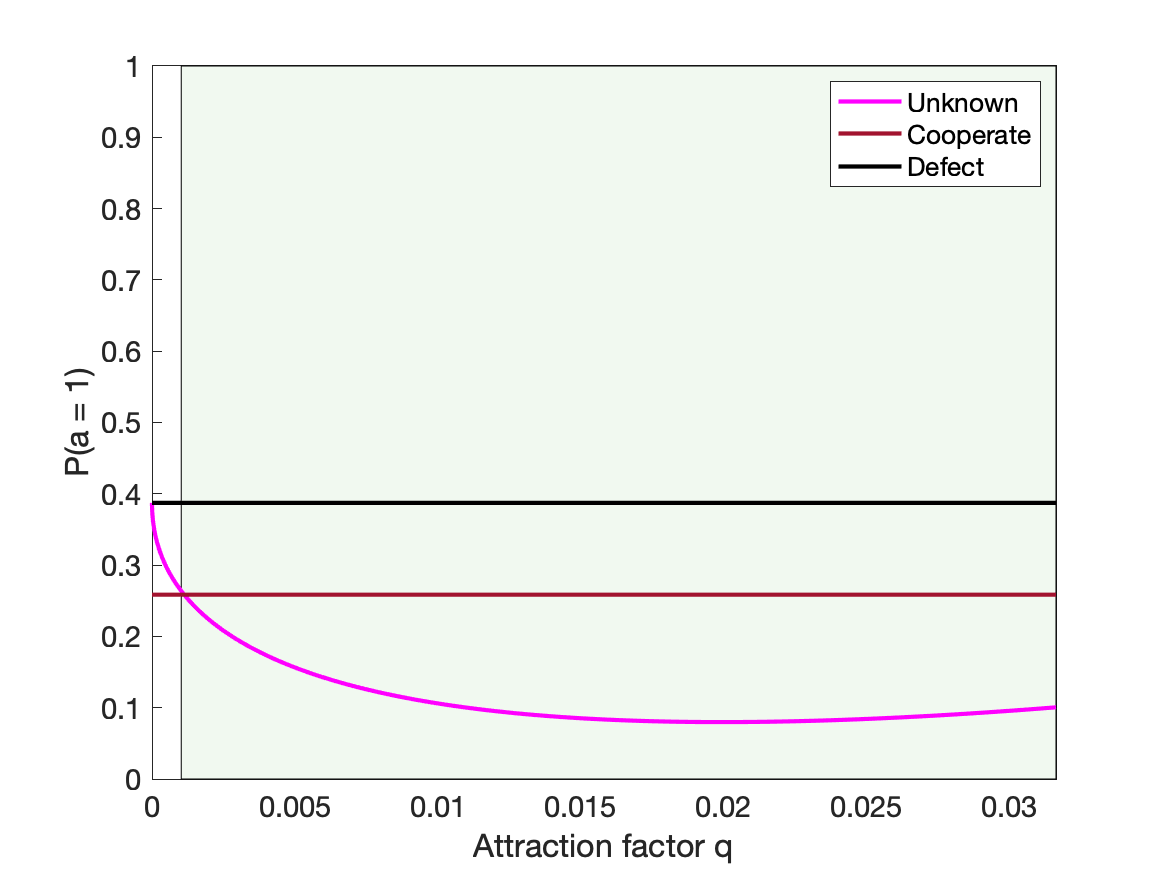}
    \caption{Demonstration of violation of sure-thing-principle with $\eps = 1$:   The probability of defection for the human receiver when the opponent is known to defect and cooperate is $0.39$ and $0.26$, respectively. The probability of defection when the action of the opponent is unknown depends on the attraction factor as illustrated in the dashed curve. We observe a violation of the total law of probability in our framework. Specifically, when the attraction factor exceeds $0.001$ (indicated by the purple region), the probability of defection for the human receiver is no longer a convex combination of the probabilities associated with the opponent's certain defection or cooperation.}
\label{fig:violation_of_sure_thing_principle}
\end{figure}

%Furthermore, in Figure \ref{fig:threshold_prior}, we plot the optimal threshold as a function of $p(H_1)$.

In addition, we plot in Figure \ref{fig:threshold_prior} the human detector's optimal decision rule, characterized by the threshold,  given the prior that the opponent chooses to defect $p(H_1)$. We observe that as $p(H_1)$ increases from $0$ to $1$, the optimal detecting threshold initially increases slowly, but later drastically after a certain point, indicating that the probability that human cooperates decreases slowly when the prior $p(H_1)$ is not too large, but switch to defect very quickly after a certain point. 
\begin{figure}
    \centering
    \includegraphics[scale=0.23]{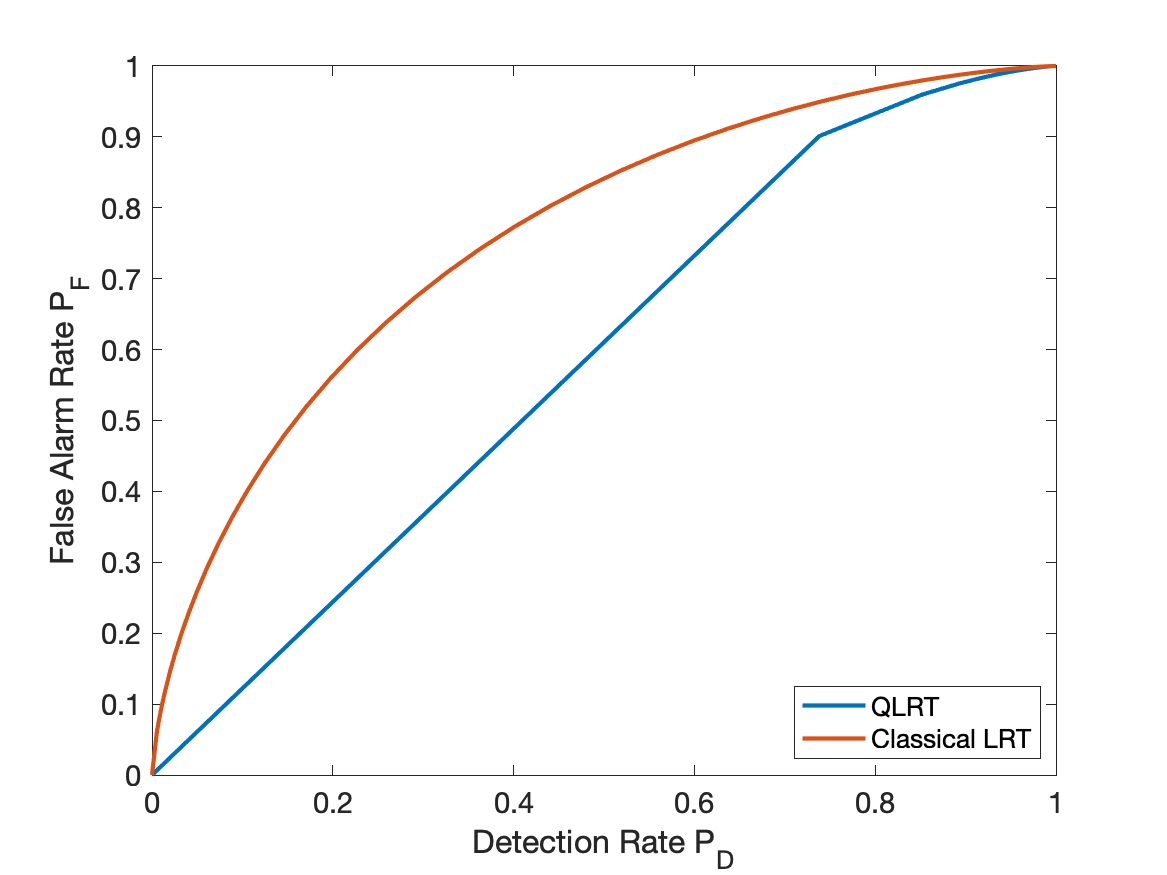}
    \caption{The ROC curves of the human agent's optimal decision rules using our framework and a previous prospect-theory-based hypothesis testing framework \cite{nadendla2016human_agent_hypo_testing}.    }
    \label{fig:threshold_prior}
\end{figure}

\textbf{The human agent's detection performance:}
% We plot the receiver-operational characteristic (ROC) curve of the human agent  human detector's optimal decision rule, characterized by the threshold.
We plot the receiver operating characteristic (ROC) curve of the human detector's optimal decision rule, which is determined by the threshold.
 Assume that the two underlying distributions $f_1,f_0$ are both Gaussian with mean $0$ and mean $1$ and with the same variance of $1$. We choose randomly the coefficients $a_{sk}$ in \eqref{eq:rho0_prospect_state} and \eqref{eq:rho1_prospect_state}.
% We can observe from Figure \ref{fig:threshold_prior} that the quantum human agent leads to a different as the attraction factor, which results from quantum interference of human's mind, influences the detection performance.  
Figure \ref{fig:threshold_prior} illustrates the observed ROC curve, highlighting the impact of the attraction factor on the human agent's detection performance. Notably, the quantum human agent exhibits a distinct behavior due to the influence of the attraction factor, which arises from the quantum interference of the human mind. This interference phenomenon leads to an intriguing deviation in the detection performance.

\section{Conclusion}
\label{sec:conclusion}

In this paper, we have proposed a novel detection framework for human-sensor systems based on quantum decision theory, which effectively captures the bounded rationality of human agents, including their risk preferences, in the decision-making process. We have specifically focused on deriving the optimal decision rule for human agents under a particular case. Additionally, we conduct an analysis to highlight the impact of the attraction factor on the detecting performance of human agents. To achieve this, we have compared the receiver operating characteristic (ROC) curves obtained using quantum decision models with those obtained using prospect-theory-based models.

Our framework possesses the flexibility to be extended to cases where the sender exhibits different preferences, which can be characterized by specifying distinct utility functions. For instance, the sender may be adversarial towards the human receiver and aim to employ persuasive strategies that lead the human agent to higher error rates. Such conflicting relationships often arise in network security detection problems and are frequently studied using game theory (see \cite{manshaei2013game_network_security}).
%In this paper we formulate a detection framework applicable in human-sensor system using quantum decision theory, capturing bounded rationality including human's risk preference for the decision making processing of human agents. Under a special case we derive the optimal decision rule for human agents. We also illustrate the impact of attraction factor upon the human agent's detecting performance via comparing ROC curves of human agents using quantum decision models and using prospect-theory-based models.

%Our framework can be generalized to the cases where the sender has various preferences, which can be characterized by specifying different utility functions. For instance, the sender could be antagonistic to the human receiver and may aim to adopt persuading strategies leading human agent's to higher error rates. Such a conflicting relationship is often seen in detection problems in network security and is frequently characterized by game theory \cite{manshaei2013game_network_security}. 
\newpage
\bibliographystyle{IEEEbib}
\bibliography{thesis}
\newpage

\end{document}